\theoremstyle{definition}
\newtheorem{defn}{Definition}
\newcommand{\Z}{\mathbb{Z}}
\newcommand{\N}{\mathbb{N}}
\newtheorem{thm}{Theorem}[section]
\newtheorem{exm}{Example}[section]
\newtheorem{rem}{Remark}[section]
\newtheorem{cor}{Corollary}[section]
\def\ps@pprintTitle{%
   \let\@oddhead\@empty
   \let\@evenhead\@empty
   \def\@oddfoot{\reset@font\hfil\thepage\hfil}
   \let\@evenfoot\@oddfoot
}
\begin{document}
\begin{frontmatter}

\title{Relationship of Two Discrete Dynamical Models: One-dimensional Cellular Automata and Integral Value Transformations}
\author[mymainaddress1,mymainaddress2]{Sreeya Ghosh}
\ead{sreeya135@gmail.com}
\author[mymainaddress3]{Sudhakar Sahoo}
\ead{sudhakar.sahoo@gmail.com}
\author[mymainaddress4]{Sk. Sarif Hassan\corref{cor1}}
\ead{sarimif@gmail.com}
\author[mymainaddress5,mymainaddress2]{Jayanta Kumar Das} 
\ead{dasjayantakumar89@gmail.com}
\author[mymainaddress1]{Pabitra Pal Choudhury}
\ead{pabitra@isical.ac.in}
\cortext[cor1]{Corresponding author}
\address[mymainaddress1]{Department of Applied Mathematics, University of Calcutta, Kolkata-700009, India}
\address[mymainaddress2]{Applied Statistics Unit, Indian Statistical Institute, Kolkata-108 , India}

\address[mymainaddress3]{Institute of Mathematics and Applications, Bhubaneshwar-751029, India}
\address[mymainaddress4]{Dept. of Mathematics, Pingla Thana Mahavidyalaya, Paschim Medinipur-721140, India}
\address[mymainaddress5]{School of Medicine, Johns Hopkins University, MD-21210, USA}

\begin{abstract}
Cellular Automaton(CA) and an Integral Value Transformation(IVT) are two well established mathematical models which evolve in discrete time steps. Theoretically, studies on CA suggest that CA is capable of producing a great variety of evolution patterns. However computation of non-linear CA or higher dimensional CA maybe complex, whereas IVTs can be manipulated easily.\\
The main purpose of this paper is to study the link between a transition function of a one-dimensional CA and IVTs. Mathematically, we have also established the algebraic structures of a set of transition functions of a one-dimensional CA as well as that of a set of IVTs using binary operations. Also DNA sequence evolution has been modeled using IVTs.
\end{abstract}

\begin{keyword}
Cellular Automaton \sep Integral Value Transformations
\end{keyword}
\end{frontmatter}

\section{Introduction}
Cellular Automaton(pl. cellular automata, abbrev. CA)  is a discrete model which has applications in computer science, mathematics, physics, complexity science, theoretical biology and microstructure modeling. This model was introduced by J.von Neumann and S.Ulam in 1940 for designing self replicating systems \cite{von1966theory,schiff2011cellular,ulam1962some}.

A CA consists of a finite/countably infinite number of finite-state semi-automata \cite{ghosh2016evo,ghosh2016finite} known as \lq cells\rq~arranged in an ordered $n$-dimensional grid. Each cell receives input from the neighbouring cells and changes according to the transition function. The transitions at each of the cells together induces a change of the grid pattern. The simplest CA is a CA where the grid is a one-dimensional line. Stephen Wolfram's work in the 1980s contributed to a systematic study of one-dimensional CA, providing the first qualitative classification of their behaviour \cite{wolfram2002new}. CA has been studied for solving many interesting problems on utilizing mathematical bases such as polynomial, matrix algebra, Boolean derivative etc.\cite{choudhury2009investigation,das1992characterization}. Further, dynamic behaviour of CA can also be studied using various mathematical tools \cite{wuensche1997attractor,wuensche2003discrete,xu2009dynamical,edwards2019class}, that help to understand the crucial properties and modeling of various classes of discrete dynamical system \cite{edwards2019class}.

An Integral Value Transformation (abbrev. IVT), a class of discrete dynamical system, were first introduced during 2009-10 (\cite{hassan2010collatz}). A IVT  of $k$-dimension is a function defined using $p$-adic numbers over $\N_0^k$ where $\N_0=\N \cup \{0\}$, $p \in \N$. The IVTs have been studied in different viewpoints including the understanding of the evolution of integer sequences and behavioral patterns of integers in discrete time points \cite{hassan2011integral,das2017two}. 

In this paper, Wolfram code of an elementary CA and global transition function of a one-dimensional CA has been represented through IVTs. The algebraic structure of a set of transition functions of a one-dimensional CA as well as the algebraic structure of a set of IVTs under some binary operations have been studied. In the last section, some specific applications have been discussed in which we have seen certain novelty lies in IVTs over one-dimensional CA.
 
\section{Mathematical preliminaries}
\begin{defn}
	Let  \textbf{$Q$} be a finite set of memory elements also called the \textbf{state set}.\\ A \textbf{global configuration} is a mapping from the group of integers ${\Z}$ to the set $Q$ given by $C: {\Z} \rightarrow Q$. The set $Q^{\Z}$ is the set of all global configurations where $Q^{\Z} = \{C|C : {\Z} \rightarrow Q\}$.
\end{defn}

\begin{defn}
	A mapping $\tau : Q^{\Z} \rightarrow Q^{\Z}$  is called a  \textbf{global transition function}.\\A \textbf{CA}(denoted by $\mathcal{C}_\tau^Q$)(reported in \cite{ghosh2017some,kari2005theory}) is a triplet $(Q, Q^{\Z}, \tau)$, where,
	\begin{itemize}
		\item $Q$ is the finite state set
		\item $Q^{\Z}$ is the set of all configurations
		\item $\tau$ is the global transition function
	\end{itemize}
	\begin{defn}
		The set $\mathscr{Q}^{\Z}= \{\tau | \tau : Q^{\Z} \rightarrow Q^{\Z}\}$ is the set of all possible global transition functions of  CA having state set $Q$.
	\end{defn}
	A mapping $\tau$ is invertible if $\forall C_i, C_j \in Q^\Z$, $\exists \tau^{-1}$ such that $$\tau(C_i)=C_j \Leftrightarrow \tau^{-1}(C_j)=C_i$$  
\end{defn}

\begin{defn}
	For $i\in {\Z}, r\in {\N}$,  let $S_i = \{i-r,...,i-1,i,i+1,...,i+r \}\subseteq {\Z}$. $S_i$ is the neighbourhood of the $i^{th}$ cell, $r$ is the radius of the neighbourhood of a cell. It follows that ${\Z} = \bigcup_iS_i$.\\A restriction from ${\Z}$ to $S_i$ induces a restriction of $C$ to $\overline{c_i}$ given by $\overline{c_i}:S_i \rightarrow Q$; where $\overline{c_i}$ may be called \textbf{local configuration} of the $i^{th}$ cell.\\
	The mapping $\mu_i : Q^{S_i} \rightarrow Q$ is known as a \textbf{local transition function} for the $i^{th}$ cell having radius $r$. Thus $ \forall i \in { \Z},\mu_i(\overline{c_i})\in Q$  and it follows that,
	$$ \tau(C)=\tau(...,c_{i-1},c_i,c_{i+1},...)=.....\mu_{i-1}(\overline{c_{i-1}}).\mu_i(\overline{c_i}).\mu_{i+1}(\overline{c_{i+1}}).......$$
\end{defn}
\begin{defn}
	The set $M= \{\mu_i | \mu_i : Q^{S_i} \rightarrow Q, i \in \Z\}$ is the set of all possible local transition functions of CA having state set $Q$.
\end{defn}
\begin{defn} If for a particular CA, $ |Q|=2$ so that we can write $Q=\{0, 1\}$, then the CA is said to be a \textbf{binary CA} or a \textbf{Boolean CA}. A Boolean CA having radius $1$ is known as an \textbf{Elementary CA(ECA)}. 
\end{defn}

\begin{defn}
	\textbf{Wolfram code} is a naming system often used for a one-dimensional CA introduced by Stephen Wolfram (see\cite{wolfram2002new}).\\For a one-dimensional CA with $Q$ states, the local rule $\mu_i$ for some $i^{th}$ cell, $i \in {\Z}$ of radius $r$(neighbourhood $2r+1$) can be specified by an $Q^{2r+1}$-bit sequence. The decimal equivalent form of this sequence is known as the Wolfram code. 
\end{defn}
Thus, the Wolfram code for a particular rule is a number in the range from $0$ to $Q^{Q^{2r+1}}-1$, converted from $Q$-ary to decimal notation.
\begin{exm}
	Let the local rule of an ECA for some $i^{th}$ cell, $i \in {\Z}$ be,
	$$\mu_i(\overline{c_i}) = \mu_i(c_{i-1},c_i,c_{i+1}) = (c_{i-1} \vee c_{i+1}) \wedge c_i$$ where $'\vee'$ stands for OR operation, $'\wedge'$ stands for AND operation, $c_j$ is the $j^{th}$ cell configuration for $j=i-1, i, i+1$.\\ Then we get the $2^3$-bit sequence as, $$\mu(111)\mu(110)\mu(101)\mu(100)\mu(011)\mu(010)\mu(001)\mu(000)~~=~~ 1~1~0~0~1~0~0~0$$
	The decimal equivalent number for $11001000$ is $200$ and so the Wolfram code is RULE \textbf{200}.
\end{exm}

\begin{defn}
	A $p$-adic $k-$dimensional \textbf{Integral Value Transformation(IVT)} denoted by $IVT_j^{p,k}$ for $p \in \N$, $k \in \N$ is a function of $p$-base numbers from $\N_0^k$ to $\N_0$ defined(in \cite{hassan2012onedmensional}) as $$IVT_j^{p,k}(n_1,n_2,...,n_k)=(f_j(a_0^{n_1},...,a_0^{n_k})f_j(a_1^{n_1},...,a_1^{n_k})...f_j(a_{l-1}^{n_1},...,a_{l-1}^{n_k}))_p=m$$ where $\N_0=\N \cup \{0\}$, $p \in \N$, $n_s=(a_0^{n_s}a_1^{n_s}...a_{l-1}^{n_s})_p$ for $s=1,2,...,k$.\\$f_j$ is a function from $\{0,1,...,p-1\}^k$ to $\{0,1,...,p-1\}$ for $j=0,1,...,p^{p^k}-1$,\\ $m$ is the decimal conversion of the $p$-base number.
\end{defn}
\begin{rem}
	In particular(discussed in \cite{choudhury2009theory,choudhury2011act,das2016multi,pal2012properties}), if $\forall i=0,1,...,l-1$, the function $f_j$ be defined as
	\begin{enumerate}
		\item $f_j(a_i^{n_1},...,a_i^{n_k})= \lfloor{(a_i^{n_1}+...+a_i^{n_k})/p}\rfloor$, then $IVT_j^{p,k}$ is known as a \textbf{Modified Carry Value Transformation(MCVT)}.\\Again, $MCVT$ with a $0$ padding at the right end is known as a \textbf{Carry Value Transformation(CVT)}.
		\item $f_j(a_i^{n_1},...,a_i^{n_k})= (a_i^{n_1}+...+a_i^{n_k})~mod~p$, then $IVT_j^{p,k}$ is known as an \textbf{Exclusive OR Transformation(XORT)}.
		\item $f_j(a_i^{n_1},...,a_i^{n_k})= max{(a_i^{n_1},...,a_i^{n_k})}$, then $IVT_j^{p,k}$ is known as an \textbf{Extreme Value Transformation(EVT)}.
	\end{enumerate}
\end{rem}

\section{Wolfram code of an ECA and IVT}
For an ECA, Wolfram code for a local rule $\mu$  is the decimal number $j \in \{0,1,2,...,255\}$ obtained from the  $8$-bit sequence $$\mu(111)\mu(110)\mu(101)\mu(100)\mu(011)\mu(010)\mu(001)\mu(000)=j$$ 
Therefore, Wolfram code for each local transition function of a $3$-neighbourhood Boolean CA can be represented by a $2$  base $3$-dimensional IVT.\\ 
The function $\mu$ in the above $8$-bit sequence can be represented by a function $f_j:\{0,1\}^3 \rightarrow \{0,1\}$ which gives $$(f_j(111)f_j(110)f_j(101)f_j(100)f_j(011)f_j(010)f_j(001)f_j(000))_2$$ $$=IVT_j^{2,3}(11110000_2,11001100_2,10101010_2)=IVT_j^{2,3}(240_{10},204_{10},170_{10})$$
Hence it follows that for a $3-$neighbourhood Boolean CA, any Wolfram code $j \in \{0,1,2,...255\}$ can be equivalently represented by $IVT_j^{2,3}(240_{10},204_{10},170_{10})$. 
\begin{exm}
	Wolfram code 200 can be equivalently represented as $$200_{10}=(11001000)_2=IVT_{200}^{2,3}(240_{10},204_{10},170_{10})$$
\end{exm}
However the following example shows that for $j \in \{0,1,2,...255\}$, any $IVT_j^{2,3}$ may not correspond to a Wolfram code.
\begin{exm}
	$$IVT_{j}^{2,3}(240_{10},204_{10},171_{10})=IVT_j^{2,3}(11110000_2,11001100_2,10101011_2)$$
	$$=(f_j(111)f_j(110)f_j(101)f_j(100)f_j(011)f_j(010)f_j(001)f_j(001))_2$$
	In this $8$-bit sequence, $f_j(000)$ is missing and so this cannot correspond to any Wolfram code.
\end{exm}
\subsection{Some Particular Wolfram codes and Particular IVTs}
We know that transformations such as MCVT, XORT, EVT are particular cases of IVTs. Again, any Wolfram code can be represented by an IVT. Therefore some particular Wolfram codes which can be represented by an MCVT, XORT or EVT are as follows.     
\begin{enumerate}
	\item Let the local rule of an ECA for some $i^{th}$ cell, $i \in {\Z}$ be,
	$$\mu_i(\overline{c_i}) = \mu_i(c_{i-1},c_i,c_{i+1}) = (c_{i-1} \wedge c_i) \vee((c_{i-1} \vee c_i) \wedge c_{i+1})$$ Then we get the $2^3$-bit sequence as $11101000$ and Wolfram code $232$.\\ Here, $\mu_i$ can be represented by function $f_j:\{0,1\}^3 \rightarrow \{0,1\}$ given by $$f_j(c_{i-1},c_i,c_{i+1})=(c_{i-1} \wedge c_i) \vee((c_{i-1} \vee c_i) \wedge c_{i+1})=\lfloor{(c_{i-1} + c_i + c_{i+1})/2}\rfloor$$ Thus, Wolfram code $232$ can be represented as $$(f_{232}(111)f_{232}(110)f_{232}(101)f_{232}(100)f_{232}(011)f_{232}(010)f_{232}(001)f_{232}(000))_2$$ $$=(\lfloor{3/2}\rfloor\lfloor{2/2}\rfloor\lfloor{2/2}\rfloor\lfloor{1/2}\rfloor\lfloor{2/2}\rfloor\lfloor{1/2}\rfloor\lfloor{1/2}\rfloor\lfloor{0/2}\rfloor)_2$$ $$=MCVT_{232}^{2,3}(11110000_2,11001100_2,10101010_2)=MCVT_{232}^{2,3}(240_{10},204_{10},170_{10})$$
	\item Let the local rule of an ECA for some $i^{th}$ cell, $i \in {\Z}$ be,
	$$\mu_i(\overline{c_i}) = \mu_i(c_{i-1},c_i,c_{i+1}) = (c_{i-1} \underline{\vee} c_i \underline{\vee} c_{i+1})$$ Then we get the $2^3$-bit sequence as $10010110$ and Wolfram code $150$.\\ Here, $\mu_i$ can be represented by function $f_j:\{0,1\}^3 \rightarrow \{0,1\}$ given by $$f_j(c_{i-1},c_i,c_{i+1})=(c_{i-1} \underline{\vee} c_i \underline{\vee} c_{i+1})=(c_{i-1} + c_i + c_{i+1})~mod~2$$ Thus, Wolfram code $150$ can be represented as $$(f_{150}(111)f_{150}(110)f_{150}(101)f_{150}(100)f_{150}(011)f_{150}(010)f_{150}(001)f_{150}(000))_2$$ $$=((3mod2)(2mod2)(2mod2)(1mod2)(2mod2)(1mod2)(1mod2)(0mod2))_2$$ $$=XORT_{150}^{2,3}(11110000_2,11001100_2,10101010_2)=XORT_{150}^{2,3}(240_{10},204_{10},170_{10})$$
	\item Let the local rule of an ECA for some $i^{th}$ cell, $i \in {\Z}$ be,
	$$\mu_i(\overline{c_i}) = \mu_i(c_{i-1},c_i,c_{i+1}) = (c_{i-1} \vee c_i \vee c_{i+1})$$ Then we get the $2^3$-bit sequence as $11111110$ and Wolfram code $254$.\\ Here, $\mu_i$ can be represented by function $f_j:\{0,1\}^3 \rightarrow \{0,1\}$ given by $$f_j(c_{i-1},c_i,c_{i+1})=(c_{i-1} \vee c_i \vee c_{i+1})=max\{c_{i-1}, c_i, c_{i+1}\}$$ Thus, Wolfram code $254$ can be represented as $$(f_{254}(111)f_{254}(110)f_{254}(101)f_{254}(100)f_{254}(011)f_{254}(010)f_{254}(001)f_{254}(000))_2$$ $$=(11111110)_2$$ $$=EVT_{254}^{2,3}(11110000_2,11001100_2,10101010_2)=EVT_{254}^{2,3}(240_{10},204_{10},170_{10})$$
\end{enumerate}

\section{Transition Function of a CA and IVT}
\begin{defn}
	The set $T^{p,k}=\{IVT_j^{p,k}|IVT_j^{p,k} : \N_0^k \rightarrow \N_0\}$ is the set of all $p$-base $k$-dimensional IVTs for $p \in \N, k \in \N$.
\end{defn}
\begin{defn}
	The $v$-fold cartesian product $\underbrace{T^{p,k} \times ... \times T^{p,k}}_{v~times}$ for $v \in \N$, denoted by $\mathscr{T}^v$ is given by $$\mathscr{T}^v=\{(IVT_{j_1}^{p,k},IVT_{j_2}^{p,k},...,IVT_{j_v}^{p,k})\}$$ where $IVT_{j_s}^{p,k}:\N_0^k \rightarrow \N_0 \in T^{p,k}$, for $s=1,2,...,v(\in \N)$
\end{defn}
\begin{defn}
	Let a restriction on $IVT_j^{p,k}$ from $\N_0^k$ to $Q^k=\{0,1,...,p-1\}^k$ for $p \in \N$ be denoted by $\underline{IVT}_j^{p,k}$. The set $T^{p,k}_{~|Q}=\{\underline{IVT}_j^{p,k}|\underline{IVT}_j^{p,k} : Q^k \rightarrow Q\}$ is the set of all $p$-base $k$-dimensional IVTs when $\N_0^k$ is restricted to the subset $Q^k$. Therefore for $(\eta_1,\eta_2,...,\eta_k) \in Q^k$ we get, $$\underline{IVT}_j^{p,k}(\eta_1,\eta_2,...,\eta_k)= f_j(\eta_1,\eta_2,...,\eta_k)_p= m \in Q$$ 
\end{defn}

\begin{defn}
	The set $\mathscr{T}^{v}_{~~|Q}=\{(\underline{IVT}_{j_1}^{p,k},\underline{IVT}_{j_2}^{p,k},...,\underline{IVT}_{j_v}^{p,k})\}$ is the $v$-fold cartesian product $\underbrace{T^{p,k}_{~|Q} \times ... \times T^{p,k}_{~|Q}}_{v~times}$ when $\N_0^k$ is restricted to subset $Q^k$ where $\underline{IVT}_{j_s}^{p,k}:Q^k \rightarrow Q \in T^{p,k}_{~|Q}$ for $s=1,2,...,v(\in \N)$
\end{defn}

Now, a local transition function of any $i^{th}$ cell of a one-dimensional CA having $p(\in \N)$ states and radius $r(\in \N)$ will be of the form $$\mu_i(\overline{c_i}) = \mu_i(c_{i-r},...,c_i,...,c_{i+r})=c_i^*$$ where $c_{i-r},...,c_i,...,c_{i+r}, c_i^* \in \{0,1,...,p-1\}$.\\
This  can be represented by some $\underline{IVT}_{j_i}^{p,k}$, where $p \in \N, k=2r+1$ and $j_i \in \{0,1,2,...,(p^{p^k}-1)\}$ is the underlying Wolfram code(which is in turn equal to $IVT_{j_i}^{p,k}(240_{10},204_{10},170_{10})$).\\ Thus $\forall i \in \Z$, $$\mu_i(\overline{c_i}) \equiv \underline{IVT}_{j_i}^{p,k}(\overline{c_i})$$
Now if $\tau$ be the global transition function of any $v(\in \N \geq k)$-celled CA, then for a global configuration $C=(c_1,c_2,...,c_v)$,  we get $$ \tau(C)=\tau(c_1,c_2,...,c_v)=\mu_{1}(\overline{c_1}).\mu_2(\overline{c_2})....\mu_{v}(\overline{c_v})$$
Therefore it follows that, $$ \tau(c_1,c_2,...,c_v) \equiv (\underline{IVT}_{j_1}^{p,k}(\overline{c_1}),\underline{IVT}_{j_2}^{p,k}(\overline{c_2}),...,\underline{IVT}_{j_v}^{p,k}(\overline{c_v}))$$ 
Hence for any $v$-celled  ECA it follows that, $$ \tau(C) \equiv (\underline{IVT}_{j_1}^{2,3}(\overline{c_1}),\underline{IVT}_{j_2}^{2,3}(\overline{c_2}),...,\underline{IVT}_{j_v}^{2,3}(\overline{c_v}))$$ where $j_1,j_2,...,j_v \in \{0,1,...,255\}$.\\
Now if $\tau$ be the global transition function of a countably infinite celled CA with cells having $k$ neighbourhood, then for any global configuration\\ $C=(...c_{i-1},c_i,c_{i+1},...)$,  we get 
$$ \tau(C)=.....\mu_{i-1}(\overline{c_{i-1}}).\mu_i(\overline{c_i}).\mu_{i+1}(\overline{c_{i+1}}).......$$
Therefore it follows that, $$ \tau(C) \equiv (....,\underline{IVT}_{j_{i-1}}^{p,k}(\overline{c_{i-1}}),\underline{IVT}_{j_i}^{p,k}(\overline{c_i}),\underline{IVT}_{j_{i+1}}^{p,k}(\overline{c_{i+1}}),....)$$
where $...j_{i-1}j_i,j_{i+1},... \in \{0,1,2,...,(p^{p^k}-1)\}$.
\begin{exm}
	Let the initial configuration of a CA having state set $\{0,1,2\}$ be $C_0=(01201)_3$ and the transition function  be given by $$\tau(01201)_3=\mu_1(101)\mu_2(012)\mu_3(120)\mu_4(201)\mu_5(010)=(02001)_3$$ where $\mu_1, \mu_3, \mu_5$ follow Wolfram code $377$ and  $\mu_2, \mu_4$ follow Wolfram code $588$ for $3$-state CA. \\
	A local transition function can be equivalently represented by some $\underline{IVT}_j^{3,3}$ and thus it follows that $\tau(01201)_3$ is equivalent to $$\underline{IVT}_{377}^{3,3}(1,0,1) \underline{IVT}_{588}^{3,3}(0,1,2) \underline{IVT}_{377}^{3,3}(1,2,0) \underline{IVT}_{588}^{3,3}(2,0,1) \underline{IVT}_{377}^{3,3}(0,1,0)$$
\end{exm}

Conversely, for any $\underline{IVT}_j^{p,k}$ if $k$ be odd, i.e. if $k=2r+1$ for some $r<k(\in \N_0)$, then $\underline{IVT}_{j}^{p,k}(\eta_1,\eta_2,...,\eta_{2r+1})$ will be equivalent to a local transition function of the $(r+1)^{th}$ cell in a CA with $p(\in \N)$ states and $2r+1$ neighbourhood whose underlying Wolfram code is  $j \in \{0,1,...,p^{p^k}-1\}$, given by  $$\underline{IVT}_{j}^{q,l}(\eta_1,\eta_2,...,\eta_{2r+1})=\underline{IVT}_{j}^{p,k}(\overline{\eta}_{r+1}) \equiv \mu_{r+1}(\overline{\eta}_{r+1})$$
\begin{rem}
	For a $p(\in \N)$-state $k(=2r+1 \in \N)$-neighbourhood CA, clearly $\forall \overline{c_i} \in Q^{S_i} \subseteq Q^{\Z}$ $\mu_{j}(\overline{c_i}) \equiv \underline{IVT}_{j}^{p,k}(\overline{c_i})$ where $j \in \{0,1,...,p^{p^k}-1\}$ is the underlying Wolfram Code. It follows that the set of local transition functions $M$ is equivalent to the set $T^{p,k}_{~|Q}$ and vice-versa.\\ Hence a function $\phi : M \rightarrow T^{p,k}_{~|Q}$ defined by $\phi(\mu_j)(\overline{c_i})=\underline{IVT}_{j}^{p,k}(\overline{c_i})$ is an isomorphism.\\
	Moreover, for any $v(\in \N \geq k)$-celled CA, having a global configuration $C=(c_1,c_2,...,c_v)$, if $\mathscr{Q}^{v}=\{\tau|\tau : Q^v \rightarrow Q^v\}$, then $\forall, \overline{c_i}\in Q^{k} \subseteq Q^{v},  i=1,2,...,v$, it follows that  a function $\phi :\mathscr{Q}^{v}  \rightarrow \mathscr{T}^{v}_{~|Q}$ defined by $$ \phi(\tau)(c_1,c_2,...,c_v) \equiv (\underline{IVT}_{j_1}^{p,k}(\overline{c_1}),\underline{IVT}_{j_2}^{p,k}(\overline{c_2}),...,\underline{IVT}_{j_v}^{p,k}(\overline{c_v}))$$ is an isomorphism. 
\end{rem}

\section{Some Algebraic Results on CA and IVT}

\begin{thm}
	$(\mathscr{Q}^{\Z}, \circ)$ forms a monoid w.r.t. composition of global transition functions.
\end{thm}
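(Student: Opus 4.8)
The plan is to read off the three monoid axioms — closure, associativity, and existence of a two‑sided identity — directly from Definition 3, which declares $\mathscr{Q}^{\Z}$ to be the set of \emph{all} maps $\tau : Q^{\Z} \to Q^{\Z}$. In other words $(\mathscr{Q}^{\Z},\circ)$ is nothing but the full transformation monoid on the underlying set $Q^{\Z}$, so the statement reduces to the standard fact that composition of self‑maps of any fixed set is a closed, associative, unital operation.

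First I would check closure: given $\tau_1,\tau_2\in\mathscr{Q}^{\Z}$, the composite $\tau_2\circ\tau_1$ is again a well‑defined map with domain and codomain $Q^{\Z}$, hence $\tau_2\circ\tau_1\in\mathscr{Q}^{\Z}$. Next, associativity: for any $\tau_1,\tau_2,\tau_3\in\mathscr{Q}^{\Z}$ and any configuration $C\in Q^{\Z}$, both $((\tau_3\circ\tau_2)\circ\tau_1)(C)$ and $(\tau_3\circ(\tau_2\circ\tau_1))(C)$ unwind to $\tau_3(\tau_2(\tau_1(C)))$, so $(\tau_3\circ\tau_2)\circ\tau_1=\tau_3\circ(\tau_2\circ\tau_1)$ as elements of $\mathscr{Q}^{\Z}$. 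Finally, the identity: the map $\mathrm{id}_{Q^{\Z}}:Q^{\Z}\to Q^{\Z}$ defined by $\mathrm{id}_{Q^{\Z}}(C)=C$ belongs to $\mathscr{Q}^{\Z}$ by Definition 3, and $\tau\circ\mathrm{id}_{Q^{\Z}}=\mathrm{id}_{Q^{\Z}}\circ\tau=\tau$ for every $\tau\in\mathscr{Q}^{\Z}$. These three facts together establish that $(\mathscr{Q}^{\Z},\circ)$ is a monoid (and I would note it is in general noncommutative, with invertible elements exactly the bijective $\tau$, linking back to the earlier notion of an invertible CA).

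The one point worth a sentence of care is the meaning of ``global transition function.'' If one insisted, in the spirit of Curtis--Hedlund--Lyndon, that a genuine CA map be one induced by a family of local rules of some finite radius, then closure would become the substantive step: one would have to show the composite of two maps coming from radius‑$r$ local rules is itself induced by local rules of radius $2r$. Under the paper's Definition 3 no such restriction is in force, so the verification is the routine one sketched above, and I do not anticipate a genuine obstacle; the only care needed is notational bookkeeping of the equivalence ``$\equiv$'' with the IVT representation if one wishes to phrase the argument on the $\mathscr{T}^{v}_{~|Q}$ side via the isomorphism $\phi$ of the preceding Remark.
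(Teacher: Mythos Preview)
Your proof is correct and follows essentially the same route as the paper's: both verify closure and associativity of composition on $\mathscr{Q}^{\Z}$ and exhibit an identity element. The only cosmetic difference is that the paper names the identity via a local rule $\mu_e(\overline{c_i})=c_i$ and assembles $\tau_e$ from it, whereas you invoke $\mathrm{id}_{Q^{\Z}}$ directly; these are the same map, so no substantive divergence.
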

\begin{proof}
	Clearly $\mathscr{Q}^{\Z}$ is closed and associative under composition of global transition functions.\\
	The transition function $\mu_e$ such that $\forall c_i \in Q$, $\mu_{e}(\overline{c_{i}})=c_i$ is the local identity and it follows that $\tau_e \in \mathscr{Q}^\Z$ is the global identity such that $\forall C \in Q^\Z, i \in \Z$, $$\tau_e(C)= .....\mu_{e}(\overline{c_{i-1}})\mu_{e}(\overline{c_{i}})\mu_{e}(\overline{c_{i+1}}).....=C$$
	Hence the theorem. 
\end{proof}
\begin{cor}
	$(\overline{\mathscr{Q}^{\Z}}, \circ)$ forms a group w.r.t. composition of global transition functions w $\overline{\mathscr{Q}^{\Z}} \subseteq  \mathscr{Q}^{\Z}$ is the set of all invertible global transition functions of CA having state set $Q$.
\end{cor}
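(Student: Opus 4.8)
The plan is to recognize $\overline{\mathscr{Q}^{\Z}}$ as the set of units (invertible elements) of the monoid $(\mathscr{Q}^{\Z},\circ)$ obtained in the preceding theorem, and then to run the standard argument that the units of a monoid form a group. Since $\mathscr{Q}^{\Z}$ is by definition the set of \emph{all} self-maps $\tau:Q^{\Z}\to Q^{\Z}$, ``invertible'' here simply means bijective, and the biconditional $\tau(C_i)=C_j \Leftrightarrow \tau^{-1}(C_j)=C_i$ that defines $\tau^{-1}$ is symmetric in $\tau$ and $\tau^{-1}$ (uniqueness of two-sided inverses in a monoid makes $\tau^{-1}$ well defined).

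First I would check closure: for $\tau_1,\tau_2\in\overline{\mathscr{Q}^{\Z}}$ I would show that $\tau_2^{-1}\circ\tau_1^{-1}$ is a two-sided inverse of $\tau_1\circ\tau_2$, using associativity inherited from the monoid, and observe that this composite is again a self-map of $Q^{\Z}$, hence lies in $\mathscr{Q}^{\Z}$ and therefore in $\overline{\mathscr{Q}^{\Z}}$. Next, for inverses within the set: given $\tau\in\overline{\mathscr{Q}^{\Z}}$, the map $\tau^{-1}$ exists by hypothesis, and by the symmetry of the defining biconditional $\tau$ is an inverse of $\tau^{-1}$, so $\tau^{-1}\in\overline{\mathscr{Q}^{\Z}}$. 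Associativity of $\circ$ on $\overline{\mathscr{Q}^{\Z}}$ is simply the restriction of the associativity already proved on $\mathscr{Q}^{\Z}$.

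Finally I would treat the identity: the global identity $\tau_e$ built in the previous theorem from the local identity $\mu_e$ satisfies $\tau_e\circ\tau_e=\tau_e$, so $\tau_e$ is its own inverse, hence $\tau_e\in\overline{\mathscr{Q}^{\Z}}$, and it continues to act as a two-sided identity on the subset. Collecting closure, associativity, identity and inverses yields the group structure. There is no genuine obstacle here; the only point that deserves an explicit sentence is that the inverse of a bijective global transition function is automatically a legitimate element of $\mathscr{Q}^{\Z}$, which holds precisely because the paper defines $\mathscr{Q}^{\Z}$ as all functions $Q^{\Z}\to Q^{\Z}$ rather than as a proper subclass --- if one instead restricted to, say, shift-commuting continuous maps, one would additionally have to cite that inverses of such bijections retain those properties.
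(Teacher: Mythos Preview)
Your argument is correct and follows essentially the same approach as the paper: both deduce the corollary from the preceding monoid theorem by noting that the invertible elements of $(\mathscr{Q}^{\Z},\circ)$ form a group. The paper's own proof is a single sentence to this effect, whereas you have unpacked the verification of closure, identity, and inverses explicitly; your added remark that this relies on $\mathscr{Q}^{\Z}$ being defined as \emph{all} self-maps of $Q^{\Z}$ is a worthwhile clarification not made explicit in the paper.
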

\begin{proof}
	Since any $\tau \in \overline{\mathscr{Q}^{\Z}}  \subseteq  \mathscr{Q}^{\Z}$ is invertible, the corollary holds true. 
\end{proof}

\begin{thm}
	$(\mathscr{T}^{v}, \circ)$ forms a monoid w.r.t. composition of IVTs.
\end{thm}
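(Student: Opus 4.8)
The plan is to run the same three-step argument used for $(\mathscr{Q}^{\Z},\circ)$ — closure, associativity, and a two-sided identity — transported to $v$-tuples of IVTs. First I would fix the meaning of $\circ$ on $\mathscr{T}^{v}$: a tuple $\mathbf{J}=(IVT_{j_1}^{p,k},\dots,IVT_{j_v}^{p,k})$ is read, exactly as in the Remark that defines the isomorphism $\phi$ but with $\N_0$ replacing $Q$, as a self-map $F_{\mathbf{J}}$ on the space of $v$-tuples of $p$-ary numbers, the $s$-th output coordinate being $IVT_{j_s}^{p,k}$ applied to the $k$-neighbourhood of the $s$-th coordinate; then $\mathbf{J}\circ\mathbf{J}'$ is declared to be the tuple realising $F_{\mathbf{J}}\circ F_{\mathbf{J}'}$. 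With this definition associativity is inherited verbatim from associativity of composition of maps, so the real work is concentrated in \emph{closure} and the identity.

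For closure I would first isolate the structural feature of IVTs used implicitly throughout the paper: an IVT acts digit-position-wise, i.e. after zero-padding all arguments to a common length $l$, the digit of $IVT_{j}^{p,k}(n_1,\dots,n_k)$ in position $i$ equals $f_j(a_i^{n_1},\dots,a_i^{n_k})$ and depends on nothing else. Consequently, feeding the $v$ outputs of $\mathbf{J}'$ into the components of $\mathbf{J}$ yields, in coordinate $s$, a number whose $i$-th digit is $f_{j_s}$ evaluated on the $i$-th digits of the outputs of $\mathbf{J}'$, each of which is itself a fixed function of the $i$-th digits of the original inputs; composing these digit-level rules produces a single fixed map of the relevant tuple of digits into $\{0,1,\dots,p-1\}$, and by the equivalence between Wolfram codes and IVTs established earlier, any such map is the local function $f_{j''_s}$ of a genuine IVT with a well-defined code $j''_s$. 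Hence $\mathbf{J}\circ\mathbf{J}'$ is again of the required form $(IVT_{j''_1}^{p,k},\dots,IVT_{j''_v}^{p,k})\in\mathscr{T}^{v}$, modulo the dimension caveat below. For the identity I would take $\mathbf{J}_e$ whose $s$-th component is the IVT with $f$ equal to the projection onto the coordinate of the $s$-th cell — the IVT analogue of the local identity $\mu_e$ with $\mu_e(\overline{c_i})=c_i$ — which is admissible since a coordinate projection is a legitimate function into $\{0,\dots,p-1\}$; then $F_{\mathbf{J}_e}$ is the identity map, so $\mathbf{J}_e\circ\mathbf{J}=\mathbf{J}\circ\mathbf{J}_e=\mathbf{J}$ for every $\mathbf{J}$. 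Closure, associativity and this identity then give the monoid.

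The obstacle I expect is not depth but keeping the bookkeeping honest, in two places. First, arity: if a component IVT genuinely reads only a radius-$r$ neighbourhood, then the composite component reads a radius-$2r$ neighbourhood, so strictly speaking the composite is an IVT of larger dimension than $k$; "closure in $\mathscr{T}^{v}$" is literally correct only if one either takes the neighbourhood to be the entire configuration (i.e. $k=v$, so that reading the output of $\mathbf{J}'$ still amounts to reading all $v$ coordinates) or regrades $\mathscr{T}^{v}$ by dimension and checks the grading is stable under the composition actually in play. I would adopt $k=v$, which is also the most natural encoding of a finite $v$-cell machine, and for a countably-infinite CA note that one can instead simply transport the already-proven monoid structure of $(\mathscr{Q}^{\Z},\circ)$ through $\phi$. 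Second, the zero-padding convention must be stated so that leading zeros produced by an inner IVT are not discarded before the outer IVT reads them, otherwise the clean digit-wise composition identity can fail at the top digit. Once these conventions are pinned down, the verification is routine.
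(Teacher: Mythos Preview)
Your three-step skeleton --- closure, associativity, two-sided identity --- is exactly what the paper does, and your identity element (the tuple whose $s$-th component is the central-coordinate projection) is precisely the paper's $(IVT_e^{p,k},\dots,IVT_e^{p,k})$ with $IVT_e^{p,k}(\overline{n_s})=n_s$. The difference is only that you are considerably more scrupulous than the paper: its ``closure'' argument consists of writing out the composite on an arbitrary input $(n_1,\dots,n_v)$ and observing that the result lies in $\N_0^v$, without ever verifying that the composite is again represented by a $v$-tuple of IVTs of the \emph{same} parameter $k$. The arity/radius-doubling obstruction you flag, your digit-position-wise factorisation argument, and your $k=v$ fix are all your own additions; the paper neither raises nor resolves them. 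So your proposal follows the paper's route while supplying content the paper leaves implicit or absent.
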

\begin{proof}
	The set $\mathscr{T}^{v}$, for $v \in \N$, is closed under composition $'\circ'$ since for any\\ $(IVT_{i_1}^{p,k},IVT_{i_2}^{p,k},...,IVT_{i_v}^{p,k}),(IVT_{j_1}^{p,k},IVT_{j_2}^{p,k},...,IVT_{j_v}^{p,k}) \in \mathscr{T}^{v}$ and for any\\ $(n_1,n_2...,n_v) \in \N_0^v$ $\exists~(m_1,m_2,...,m_v) \in \N_0^v$ such that,
	\begin{align*}
	\left((IVT_{i_1}^{p,k},IVT_{i_2}^{p,k},...,IVT_{i_v}^{p,k}) \circ (IVT_{j_1}^{p,k},IVT_{j_2}^{p,k},...,IVT_{j_v}^{p,k})\right)(n_1,n_2,...,n_v)\\=\left((IVT_{i_1}^{p,k}\circ IVT_{j_1}^{p,k}),(IVT_{i_2}^{p,k} \circ IVT_{j_2}^{p,k}),...,(IVT_{i_v}^{p,k} \circ IVT_{j_v}^{p,k})\right)(n_1,n_2,...n_v)\\=\left((IVT_{i_1}^{p,k}\circ IVT_{j_1}^{p,k})(\overline{n_1}),(IVT_{i_2}^{p,k} \circ IVT_{j_2}^{p,k})(\overline{n_2}),...,(IVT_{i_v}^{p,k} \circ IVT_{j_v}^{p,k})(\overline{n_v})\right)\\=\left(IVT_{i_1}^{p,k}(IVT_{j_1}^{p,k}(\overline{n_1})),IVT_{i_2}^{p,k}(IVT_{j_2}^{p,k}(\overline{n_2})),...,IVT_{i_v}^{p,k}(IVT_{j_v}^{p,k}(\overline{n_v}))\right)\\
	=IVT_{i_1}^{p,k}(\overline{n_1^*})IVT_{i_2}^{p,k}(\overline{n_2^*})...IVT_{i_v}^{p,k}(\overline{n_v^*})=(m_1,m_2,...,m_v)_p
	\end{align*}
	where $n_s^*=IVT_{j_s}^{p,k}(\overline{n_s})=IVT_{j_s}^{p,k}(\eta_1^{s},\eta_2^{s},...,\eta_k^{s})$ for $s=1,2,...,v$.\\
	$\mathscr{T}^{v}$ is associative under $'\circ'$ since composition of functions are associative.\\
	Again, for $(n_1,n_2...,n_v) \in \N_0^v$ $\exists ~(IVT_{e_1}^{p,k},IVT_{e_2}^{p,k},...,IVT_{e_v}^{p,k}) \in \mathscr{T}^{v}$ such that $$(IVT_{e_1}^{p,k},IVT_{e_2}^{p,k},...,IVT_{e_v}^{p,k})(n_1,n_2,...,n_v)$$ $$=IVT_{e_1}^{p,k}(\overline{n_1})IVT_{e_2}^{p,k}(\overline{n_2})...IVT_{e_v}^{p,k}(\overline{n_v})=(n_1,n_2,...,n_v)$$
	Since $IVT_{e_s}^{p,k}(\overline{n_s})=n_s~\forall s=1,...,v,$ it follows that $e_1=e_2=...=e_v=e(say)$\\
	Thus $(IVT_{e}^{p,k},IVT_{e}^{p,k},...,IVT_{e}^{p,k})$ is the identity element of $\mathscr{T}^{v}$\\
	Hence the theorem. 
\end{proof}

\begin{defn}
	Modular addition and multiplication of two local transition functions for a $p(\in \N)$-state CA are defined $\forall i \in \Z$ as $$(\mu^1_{i} \oplus_p \mu^2_{i})(\overline{c_{i}})=\mu^1_{i}(\overline{c_{i}}) \oplus_p \mu^2_{i}(\overline{c_{i}})~~and$$ $$(\mu^1_{i} \otimes_p \mu^2_{i})(\overline{c_{i}})=\mu^1_{i}(\overline{c_{i}}) \otimes_p \mu^2_{i}(\overline{c_{i}})$$ 
\end{defn}

\begin{defn}
	Modular addition and multiplication of two $p(\in \N)$-base $k(\in \N)$-dimensional IVTs(see \cite{hassan2012onedmensional}) are defined $\forall  (n_1,n_2,...,n_k) \in \N_0^k$ and $j_1, j_2 \in \{0,1,...,p^{p^k}-1\}$, as
	$$(IVT_{j_1}^{p,k} \oplus_p IVT_{j_2}^{p,k}) (n_1,n_2,...,n_k)=IVT_{j_1}^{p,k} (n_1,n_2,...,n_k) \oplus_p IVT_{j_2}^{p,k} (n_1,n_2,...,n_k)$$ $$=\left(f_{j_1}(a_0^{n_1},...,a_0^{n_k}) \oplus_p f_{j_2}(a_0^{n_1},...,a_0^{n_k})...f_{j_1}(a_{l-1}^{n_1},...,a_{l-1}^{n_k}) \oplus_p f_{j_2}(a_{l-1}^{n_1},...,a_{l-1}^{n_k})\right)_p~~and,$$
	$$(IVT_{j_1}^{p,k} \otimes_p IVT_{j_2}^{p,k}) (n_1,n_2,...,n_k)=IVT_{j_1}^{p,k} (n_1,n_2,...,n_k) \otimes_p IVT_{j_2}^{p,k} (n_1,n_2,...,n_k)$$ $$=\left(f_{j_1}(a_0^{n_1},...,a_0^{n_k}) \otimes_p f_{j_2}(a_0^{n_1},...,a_0^{n_k})...f_{j_1}(a_{l-1}^{n_1},...,a_{l-1}^{n_k}) \otimes_p f_{j_2}(a_{l-1}^{n_1},...,a_{l-1}^{n_k})\right)_p$$ where, $n_s=(a_0^{n_s},a_1^{n_s}...,a_{l-1}^{n_s})_p$ for $s=1,2,...,k$. 
\end{defn}

\begin{thm}
	$(\mathscr{Q}^{\Z}, \oplus_p, \otimes_p)$ forms a commutative ring with identity under the operations $\oplus_p$  and  $\otimes_p$ defined as $$(\tau_1 \oplus_p \tau_2)(C)=...(\mu^1_{i-1} \oplus_p \mu^2_{i-1})(\overline{c_{i-1}}).(\mu^1_{i} \oplus_p \mu^2_{i})(\overline{c_{i}}).(\mu^1_{i+1} \oplus_p \mu^2_{i+1})(\overline{c_{i+1}})...~and$$ $$(\tau_1 \otimes_p \tau_2)(C)=...(\mu^1_{i-1} \otimes_p \mu^2_{i-1})(\overline{c_{i-1}}).(\mu^1_{i} \otimes_p \mu^2_{i})(\overline{c_{i}}).(\mu^1_{i+1} \otimes_p \mu^2_{i+1})(\overline{c_{i+1}})...$$ where $\tau_1,\tau_2 \in \mathscr{Q}^{\Z} $, $C \in Q^\Z$,  $\oplus_p$ denotes addition modulo $p$ and $\otimes_p$ denotes multiplication modulo $p$  for $|Q|=p(\in \N) $.
\end{thm}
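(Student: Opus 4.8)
The plan is to recognise $\oplus_p$ and $\otimes_p$ on $\mathscr{Q}^{\Z}$ as nothing but the componentwise transport of the ring structure of $\Z/p\Z$ through two nested ``function-space'' constructions, so that every ring axiom for $(\mathscr{Q}^{\Z},\oplus_p,\otimes_p)$ reduces to the corresponding identity in $\Z/p\Z$. Throughout, each $\tau\in\mathscr{Q}^{\Z}$ is read, as elsewhere in the paper, as being determined by its family of local rules $(\mu_i)_{i\in\Z}$ for a CA of some fixed radius $r$; note also that the operations in the statement are the cellwise modular operations, not the composition $\circ$ of the earlier theorem, so what is claimed is the additive-group/multiplicative-monoid/distributivity package, not that $\otimes_p$ is invertible.

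First I would fix $k=2r+1$ and observe that each neighbourhood $S_i$ is a translate of $S_0$, so that $Q^{S_i}$ is canonically identified with the finite set $Q^{k}$ and the local rules at cell $i$ form the set $M\cong Q^{Q^{k}}$ of all maps $Q^{k}\to Q$. Identifying $Q=\{0,1,\dots,p-1\}$ with $\Z/p\Z$ — a commutative ring with identity under $\oplus_p$ (addition mod $p$) and $\otimes_p$ (multiplication mod $p$), with zero $0$ and unit $1$ — the modular operations on local rules from the preceding Definition are precisely the pointwise operations on the function space $Q^{Q^{k}}$. Since the set of maps from any set into a commutative ring with identity is again such a ring under pointwise operations, $(M,\oplus_p,\otimes_p)$ is a commutative ring with identity: its zero is the constant rule $\overline{c_i}\mapsto 0$, its unit is the constant rule $\overline{c_i}\mapsto 1$, and the additive inverse of $\mu$ is $\overline{c_i}\mapsto p\ominus_p\mu(\overline{c_i})$ (subtraction modulo $p$), all of which are legitimate local transition functions.

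Next I would use the factorisation $\tau(C)=\cdots\mu_{i-1}(\overline{c_{i-1}})\,\mu_i(\overline{c_i})\,\mu_{i+1}(\overline{c_{i+1}})\cdots$ to identify $\mathscr{Q}^{\Z}$ with the direct product $\prod_{i\in\Z}M$ — a global transition function is exactly a choice of a local rule at each cell — under which $\oplus_p$ and $\otimes_p$ on $\mathscr{Q}^{\Z}$ are the componentwise operations of the product ring. An arbitrary direct product of commutative rings with identity is again a commutative ring with identity, so $(\mathscr{Q}^{\Z},\oplus_p,\otimes_p)$ is one as well. Concretely: closure holds because $\mu^1_i\oplus_p\mu^2_i$ and $\mu^1_i\otimes_p\mu^2_i$ are again maps $Q^{S_i}\to Q$; the additive identity is the $\tau_{\mathbf 0}$ whose local rules are all the constant $0$ (so $\tau_{\mathbf 0}(C)$ is the all-zero configuration for every $C$); the multiplicative identity is the $\tau_{\mathbf 1}$ whose local rules are all the constant $1$; the additive inverse of $\tau$ with local rules $(\mu_i)_{i\in\Z}$ has local rules $\big(\overline{c_i}\mapsto p\ominus_p\mu_i(\overline{c_i})\big)_{i\in\Z}$; and associativity and commutativity of $\oplus_p$ and of $\otimes_p$, together with the distributive law, hold in $\mathscr{Q}^{\Z}$ because they hold in each factor $M$, hence cell by cell, hence at each local configuration, in $\Z/p\Z$.

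If a more hands-on write-up is wanted, I would instead check the ring axioms one at a time: evaluate both sides of an axiom at an arbitrary $C\in Q^{\Z}$, read off the $i$-th cell as $\mu^1_i(\overline{c_i})$ combined with $\mu^2_i(\overline{c_i})$ under $\oplus_p$ (resp. $\otimes_p$), evaluate the $\mu^j_i$ to land in $\Z/p\Z$, and invoke the matching identity there; the two sides then agree in every cell, hence as configurations, hence as global transition functions. There is no genuine obstacle in either route; the only points needing care are (i) \emph{well-definedness / closure} — checking that every constructed object (sums, products, additive inverses, and the two identities) is again assembled from bona fide local transition functions and therefore lies in $\mathscr{Q}^{\Z}$ — and (ii) keeping the bookkeeping straight across the three nested levels (configurations over $\Z$, local configurations over $S_i$, states in $\Z/p\Z$) so that ``pointwise'' is applied at the intended level. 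It is also worth remarking that $\otimes_p$ is in general not invertible (the constant-$0$ rule has no multiplicative inverse, and for composite $p$ there are zero divisors), so the structure obtained is a commutative ring with identity but not a field, consistently with the statement.
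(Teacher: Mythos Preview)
Your proposal is correct and follows essentially the same route as the paper: both reduce every ring axiom to the corresponding identity in $\Z/p\Z$ by working cellwise and then pointwise on local configurations, identifying the additive and multiplicative identities with the constant-$0$ and constant-$1$ local rules and the additive inverse with the rule obtained by negating outputs modulo $p$. The only difference is packaging --- you invoke the general facts that function spaces into a ring and direct products of rings are rings, whereas the paper verifies closure, identities, inverses, associativity, commutativity and distributivity explicitly (which is precisely the ``more hands-on write-up'' you sketch at the end).
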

\begin{proof}
	In a $p(\in \N)$-state CA, for any $\tau_1, \tau_2 \in \mathscr{Q}^{\Z},$ $$(\tau_1 \oplus_p \tau_2)(C) \in Q^\Z~~and~~(\tau_1 \otimes_p \tau_2)(C) \in Q^\Z$$ since $\forall i \in \Z,$ $$(\mu^1_{i} \oplus_p \mu^2_{i})(\overline{c_{i}}) \in Q~~and~~(\mu^1_{i} \otimes_p \mu^2_{i})(\overline{c_{i}}) \in Q$$
	Therefore $\mathscr{Q}^{\Z}$ is closed w.r.t. $\oplus_p$ and $\otimes_p$.\\
	Associativity follows from associativity of $\oplus_p$ and $\otimes_p$.\\
	The local transition function $\mu_0$ such that $\forall c_i \in Q$, $\mu_{0}(\overline{c_{i}})=0$ is the additive identity since $\forall i \in \Z$ $$(\mu_{0} \oplus_p \mu_{i})(\overline{c_{i}})=(\mu_{i} \oplus_p \mu_{0})(\overline{c_{i}})=\mu_{i}(\overline{c_{i}})$$
	It follows that $\tau_0 \in \mathscr{Q}^\Z$ is the additive identity such that $\forall C \in Q^\Z$, $$\tau_0(C)= ...\mu_{0}(\overline{c_{i-1}})\mu_{0}(\overline{c_{i}})\mu_{0}(\overline{c_{i+1}})...$$
	The local transition function $\mu_{id}$ such that $\forall c_i \in Q$, $\mu_{id}(\overline{c_{i}})=1$ is the multiplicative identity since $\forall i \in \Z$ $$(\mu_{id} \otimes_p \mu_{i})(\overline{c_{i}})=(\mu_{i} \otimes_p \mu_{id})(\overline{c_{i}})=\mu_{i}(\overline{c_{i}})$$
	It follows that $\tau_{id} \in \mathscr{Q}^\Z$ is the multiplicative identity such that $\forall C \in Q^\Z$, $$\tau_{id}(C)= ...\mu_{id}(\overline{c_{i-1}})\mu_{id}(\overline{c_{i}})\mu_{id}(\overline{c_{i+1}})...$$
	
	Now any local transition function $\mu_i$, is a $p^k$-bit sequence $b_1b_2...b_{p^k}$(say), for a $p(\in \N)$-state $k(\in \N)$-neighbourhood CA where $b_1,...,b_{p^k} \in Q=\{0,1,...,p\}$. Clearly for any $b_s \in Q$, $\exists b_s^{-1} \in Q$ such that $\forall s=1,2,...,p^k$, $b_s \oplus_p b_s^{-1} =0$.\\ 
	Thus for any $\mu_i(\equiv b_1...b_{p^k})$, $\exists \mu_i^{-1}(\equiv b_1^{-1}...b_{p^k}^{-1})$ such that $\forall c_i \in Q$, $$(\mu_{i}^{-1} \oplus_p \mu_{i})(\overline{c_{i}})=(\mu_{i} \oplus_p \mu_{i}^{-1})(\overline{c_{i}})=\mu_{0}(\overline{c_{i}})=0$$ 
	It follows that for any $\tau \in \mathscr{Q}^\Z$, $\exists \tau^{-1} \in \mathscr{Q}^\Z$ such that $\forall C \in Q$, $$(\tau \oplus_p \tau^{-1})(C)=...(\mu_{i-1} \oplus_p \mu_{i-1}^{-1})(\overline{c_{i-1}})(\mu_{i} \oplus_p \mu_{i}^{-1})(\overline{c_{i}})(\mu_{i+1} \oplus_p \mu_{i+1}^{-1})(\overline{c_{i+1}})...=\tau_0(C),$$ $$(\tau^{-1} \oplus_p \tau)(C)=...(\mu_{i-1}^{-1} \oplus_p \mu_{i-1})(\overline{c_{i-1}})(\mu_{i}^{-1} \oplus_p \mu_{i})(\overline{c_{i}})(\mu_{i+1}^{-1} \oplus_p \mu_{i+1})(\overline{c_{i+1}})...=\tau_0(C),$$ 
	Thus for any global transition function $\tau$, its additive inverse exists in $\mathscr{Q}^\Z$.\\
	Commutativity follows from commutativity of $\oplus_p$ and $\otimes_p$.\\
	Now for any local transitions $\mu_i^1, \mu_i^2, \mu_i^3$, $\forall c_i \in Q, i \in \Z$ we get, $$(\mu_{i}^1 \otimes_p (\mu_{i}^2 \oplus_p \mu_{i}^3))(\overline{c_{i}})=(\mu_{i}^1 \otimes_p \mu_{i}^2)(\overline{c_{i}}) \oplus_p (\mu_{i}^1 \otimes_p \mu_{i}^3)(\overline{c_{i}})~~and,$$ $$((\mu_{i}^2 \oplus_p \mu_{i}^3)\otimes_p \mu_{i}^1)(\overline{c_{i}})=(\mu_{i}^2 \otimes_p \mu_{i}^1)(\overline{c_{i}}) \oplus_p (\mu_{i}^3 \otimes_p \mu_{i}^1)(\overline{c_{i}})$$ 
	Therefore for $\tau_1,\tau_2, \tau_3 \in \mathscr(Q)^\Z$, $\forall C \in Q^\Z$ we get, $$(\tau_1 \otimes_p (\tau_2 \oplus_p \tau_3))(C)=(\tau_1 \otimes_p \tau_2)(C) \oplus_p (\tau_1 \otimes_p \tau_3)(C)~~and,$$ $$((\tau_2 \oplus_p \tau_3)\otimes_p \tau_1)(C)=(\tau_2 \otimes_p \tau_1)(C) \oplus_p (\tau_3 \otimes_p \tau_1)(C)$$
	Hence the theorem.
\end{proof}
\begin{rem}The following example shows that $(\mathscr{Q}^{\Z}, \oplus_p, \otimes_p)$ will not form a field under the operations $\oplus_p$  and  $\otimes_p$ even if $|Q|=p$ is prime.
\end{rem}
\begin{exm}
	For initial configuration $C=(10001)_2$, let $$\tau(10001)_2=(\mu(110)\mu(100)\mu(000)\mu(001)\mu(011))_2$$
	Let $\mu$ follow Wolfram code $3(\neg (c_{i-1} \vee c_i))$.\\ Then $\tau(10001)_2=(\mu_3(110)\mu_3(100)\mu_3(000)\mu_3(001)\mu_3(011))_2=(00110)_2$.\\Now, $\tau_{id}(10001)_2=(\mu_{id}(110)\mu_{id}(100)\mu_{id}(000)\mu_{id}(001)\mu_{id}(011))_2=(11111)_2$.\\ But $\nexists$ $\tau^{-1}$ such that $(\tau \otimes_2 \tau^{-1})(10001)_2=(11111)_2$ since $\nexists$ $\mu^{-1}$ such that for all local configurations $\mu \otimes_2 \mu^{-1}=1$. 
\end{exm}

\begin{thm}
	$(\mathscr{T}^{v}, \oplus_p, \otimes_p)$ forms a commutative ring under the operations $\oplus_p$  and  $\otimes_p$ defined as
	$$\left((IVT_{i_1}^{p,k},IVT_{i_2}^{p,k},...,IVT_{i_v}^{p,k}) \oplus_p (IVT_{j_1}^{p,k},IVT_{j_2}^{p,k},...,IVT_{j_v}^{p,k})\right)(n_1,n_2,...,n_v)$$
	$$=\left((IVT_{i_1}^{p,k}\oplus_p IVT_{j_1}^{p,k}),(IVT_{i_2}^{p,k} \oplus_p IVT_{j_2}^{p,k}),.....,(IVT_{i_v}^{p,k} \oplus_p IVT_{j_v}^{p,k})\right)(n_1,n_2,...,n_v)$$
	$$=\left((IVT_{i_1}^{p,k}\oplus_p IVT_{j_1}^{p,k})(\overline{n_1}),(IVT_{i_2}^{p,k} \oplus_p IVT_{j_2}^{p,k})(\overline{n_2}),.....,(IVT_{i_v}^{p,k} \oplus_p IVT_{j_v}^{p,k})(\overline{n_v})\right)$$ $$and, \left((IVT_{i_1}^{p,k},IVT_{i_2}^{p,k},...,IVT_{i_v}^{p,k}) \otimes_p (IVT_{j_1}^{p,k},IVT_{j_2}^{p,k},...,IVT_{j_v}^{p,k})\right)(n_1,n_2,...,n_v)$$ 
	$$=\left((IVT_{i_1}^{p,k}\otimes_p IVT_{j_1}^{p,k}),(IVT_{i_2}^{p,k} \otimes_p IVT_{j_2}^{p,k}),....,(IVT_{i_v}^{p,k} \otimes_p IVT_{j_v}^{p,k})\right)(n_1,n_2,...,n_v)$$
	$$=\left((IVT_{i_1}^{p,k}\otimes_p IVT_{j_1}^{p,k})(\overline{n_1}),(IVT_{i_2}^{p,k} \otimes_p IVT_{j_2}^{p,k})(\overline{n_2}),....,(IVT_{i_v}^{p,k} \otimes_p IVT_{j_v}^{p,k})(\overline{n_v})\right)$$
	where $i_s,j_s \in \{0,1,...,p^{p^k}-1\},~p,k \in \N,~~\overline{n_s}=(\eta_1^{s},\eta_2^{s},...,\eta_k^{s})$ for $s=1,2,...,v$
\end{thm}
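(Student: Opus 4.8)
The plan is to imitate the proof of the earlier commutative-ring theorem for $(\mathscr{Q}^{\Z},\oplus_p,\otimes_p)$, pushing it through one structural observation: a $p$-base $k$-dimensional IVT is completely determined by its underlying digit function $f_j:\{0,1,\dots,p-1\}^k\to\{0,1,\dots,p-1\}$, and by the displayed definitions both $\oplus_p$ and $\otimes_p$ act on IVTs digit position by digit position and, on $\mathscr{T}^v$, coordinate by coordinate. Hence each ring axiom for $(\mathscr{T}^v,\oplus_p,\otimes_p)$ will be forced by the corresponding identity in $(\{0,1,\dots,p-1\},\oplus_p,\otimes_p)$, i.e. in $\Z/p\Z$, which for every $p\in\N$ is a commutative ring with identity.

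First I would prove closure. Given $IVT_{j_1}^{p,k},IVT_{j_2}^{p,k}\in T^{p,k}$, the assignment $(\eta_1,\dots,\eta_k)\mapsto f_{j_1}(\eta_1,\dots,\eta_k)\oplus_p f_{j_2}(\eta_1,\dots,\eta_k)$ is again a function from $\{0,1,\dots,p-1\}^k$ into $\{0,1,\dots,p-1\}$, so it equals $f_{j_3}$ for a unique $j_3\in\{0,1,\dots,p^{p^k}-1\}$; comparing with the definition of $\oplus_p$ on IVTs gives $IVT_{j_1}^{p,k}\oplus_p IVT_{j_2}^{p,k}=IVT_{j_3}^{p,k}\in T^{p,k}$, and the same reasoning with $\otimes_p$ yields some $j_4$. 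Doing this in each of the $v$ coordinates shows $\mathscr{T}^v$ is closed under both operations. Associativity and commutativity of $\oplus_p$ and $\otimes_p$ on $\mathscr{T}^v$ then reduce, coordinatewise and digit-wise, to associativity and commutativity of addition and multiplication modulo $p$; I would spell this out in one coordinate of a general triple and note that the rest is identical.

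Next I would record the identities and the additive inverses. The $v$-tuple all of whose entries are the IVT with underlying function identically $0$ (namely $IVT_0^{p,k}$) is the additive identity, since $b\oplus_p 0=b$ for each digit $b$; the $v$-tuple all of whose entries are the IVT with underlying function identically $1$ is a multiplicative identity, so in fact $(\mathscr{T}^v,\oplus_p,\otimes_p)$ is a commutative ring with identity, in parallel with the $\mathscr{Q}^{\Z}$ case. For additive inverses, for each $b\in\{0,1,\dots,p-1\}$ there is $b^{-1}:=(p-b)\bmod p$ with $b\oplus_p b^{-1}=0$; defining $f_{j_s'}$ digit-wise by $f_{j_s'}(\eta_1,\dots,\eta_k)=(p-f_{j_s}(\eta_1,\dots,\eta_k))\bmod p$ realises an $IVT_{j_s'}^{p,k}\in T^{p,k}$, and then $(IVT_{j_1'}^{p,k},\dots,IVT_{j_v'}^{p,k})$ is the additive inverse of $(IVT_{j_1}^{p,k},\dots,IVT_{j_v}^{p,k})$. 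Finally left and right distributivity follow, again coordinatewise and digit-wise, from $a\otimes_p(b\oplus_p c)=(a\otimes_p b)\oplus_p(a\otimes_p c)$ in $\Z/p\Z$; collecting everything proves the theorem.

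The one step that is more than bookkeeping is the closure argument: one must confirm that the digit-wise combination of two IVTs is itself realised by a legitimate IVT index and does not fall outside $T^{p,k}$. Once this is seen — equivalently, once one observes that $(T^{p,k},\oplus_p,\otimes_p)$ is precisely the direct product, over the $p^k$ possible input tuples, of copies of $\Z/p\Z$, so that $(\mathscr{T}^v,\oplus_p,\otimes_p)$ is a $(v\cdot p^k)$-fold direct product of $\Z/p\Z$ — every remaining axiom is inherited from that product and there is no genuine obstacle. I would also remark, in analogy with the Example showing $\mathscr{Q}^{\Z}$ is not a field even for prime $p$, that this ring is likewise not a field when $p$ is prime, since a constant-$0$ digit value is never invertible under $\otimes_p$.
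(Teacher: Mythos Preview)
Your verification of the ring axioms follows the paper's own proof almost step for step: both check closure, then inherit associativity and commutativity of $\oplus_p,\otimes_p$ from $\Z/p\Z$, identify $(IVT_0^{p,k},\dots,IVT_0^{p,k})$ as additive identity, exhibit additive inverses via the digit-wise negation $b\mapsto (p-b)\bmod p$, and deduce distributivity coordinatewise. Your packaging as a direct product of copies of $\Z/p\Z$ is a tidier way of saying the same thing; it is not a different argument.

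Where you diverge from the paper is the multiplicative identity. You assert that the $v$-tuple whose entries have digit function $f\equiv 1$ is a multiplicative identity, so that $(\mathscr{T}^v,\oplus_p,\otimes_p)$ is a commutative ring \emph{with identity}. The paper intentionally stops short of this: the theorem as stated claims only ``commutative ring,'' the very next theorem supplies the multiplicative identity only for the restricted version $\mathscr{T}^v_{\,|Q}$, and the Conclusion stresses that the unrestricted $\mathscr{T}^v$ ``forms only a commutative ring.'' The underlying objection is a well-definedness one on the domain $\N_0^k$: for an IVT with $f(0,\dots,0)\neq 0$ (in particular $f\equiv 1$), the output $(f(\cdot)\,f(\cdot)\cdots f(\cdot))_p$ depends on the length $l$ chosen for the $p$-adic representation of the inputs, since padding with leading zeros changes the number of output digits. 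Thus the candidate $IVT_{id}^{p,k}$ is not a well-defined map $\N_0^k\to\N_0$ and hence is not an element of $T^{p,k}$; no such obstruction arises for $IVT_0^{p,k}$ because $f_0(0,\dots,0)=0$. On $Q^k$ every input is a single digit, so the issue disappears, which is exactly why the paper defers the ``with identity'' conclusion to $\mathscr{T}^v_{\,|Q}$. Your proof of the stated theorem stands, but the additional identity claim oversteps the statement and runs against the paper's explicit distinction.
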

\begin{proof}
	For any $(IVT_{i_1}^{p,k},IVT_{i_2}^{p,k},...,IVT_{i_v}^{p,k}), (IVT_{j_1}^{p,k},IVT_{j_2}^{p,k},...,IVT_{j_v}^{p,k}) \in  \mathscr{T}^{v}$, and $(n_1,n_2,...,n_v) \in \N_0^v$, $$\left((IVT_{i_1}^{p,k},IVT_{i_2}^{p,k},...,IVT_{i_v}^{p,k}) \oplus_p (IVT_{j_1}^{p,k},IVT_{j_2}^{p,k},...,IVT_{j_v}^{p,k})\right)(n_1,n_2,...,n_v) \in \mathscr{\N_0}^{v}$$ $$and,\left((IVT_{i_1}^{p,k},IVT_{i_2}^{p,k},...,IVT_{i_v}^{p,k}) \otimes_p (IVT_{j_1}^{p,k},IVT_{j_2}^{p,k},...,IVT_{j_v}^{p,k})\right)(n_1,n_2,...,n_v)  \in \mathscr{\N_0}^{v}$$ since for every $s=1,2,...,v,$ $$(IVT_{i_s}^{p,k}\oplus_p IVT_{j_s}^{p,k})(\overline{n_s}) \in \N_0~~and~~(IVT_{i_s}^{p,k}\otimes_p IVT_{j_s}^{p,k})(\overline{n_s}) \in \N_0$$ where $\overline{n_s}=(\eta_1^{s},\eta_2^{s},...,\eta_k^{s})$\\ 
	Therefore $\mathscr{T}^{v}$ is closed w.r.t. $\oplus_p$ and $\otimes_p$ for $p \in \N$.\\
	Associativity follows from associativity of $\oplus_p$ and $\otimes_p$.\\
	$IVT_{0}^{p,k} \in T^{p,k}$ is such that $\forall \overline{n_s} \in \N_0^k,$~ $IVT_{0}^{p,k}(\overline{n_s})=0$  is the additive identity of $T^{p,k}$ since $\forall s=1,2,...,v$ $$(IVT_{0}^{p,k} \oplus_p IVT_{s}^{p,k})(\overline{n_s})= (IVT_{s}^{p,k} \oplus_p IVT_{0}^{p,k})(\overline{n_s})= IVT_{s}^{p,k}(\overline{n_s})$$
	where, for each $(\overline{n_s})= (\eta_1,\eta_2,...,\eta_k) \in \N_0^k$, $$IVT_{0}^{p,k} (\eta_1,\eta_2,...,\eta_k)=(f_0(a_0^{\eta_1},...,a_0^{\eta_k}),...,f_0(a_{l-1}^{\eta_1},...,a_{l-1}^{\eta_k}))=0$$  where, $\eta_u=(a_0^{\eta_u}a_1^{\eta_u}...a_{l-1}^{\eta_u})_p$ for $u=1,2,...,k.$\\
	It follows that $(IVT_{0}^{p,k},IVT_{0}^{p,k},...,IVT_{0}^{p,k})$  is the additive identity of  $\mathscr{T}^{v}$ such that for any   $(n_1,n_2,...,n_v) \in \N_0^v$ $$(IVT_{0}^{p,k},IVT_{0}^{p,k},...,IVT_{0}^{p,k})(n_1,n_2,...,n_v)=\left(IVT_{0}^{p,k}(\overline{n_1}),IVT_{0}^{p,k}(\overline{n_2}),...,IVT_{0}^{p,k}(\overline{n_v})\right)$$
	
	For any $IVT_{s}^{p,k} \in T^{p,k}$~ $\exists~IVT_{-s}^{p,k} \in T^{p,k}$ such that $\forall \overline{n_s} \in \N_0^k$, $$(IVT_{s}^{p,k} \oplus_p IVT_{-s}^{p,k})(\overline{n_s})= (IVT_{-s}^{p,k} \oplus_p IVT_{s}^{p,k})(\overline{n_s})= IVT_{0}^{p,k}(\overline{n_s})=0$$ 
	It follows that for any $(IVT_{i_1}^{p,k},IVT_{i_2}^{p,k},...,IVT_{i_v}^{p,k})\in \mathscr{T}^{v}$,~$\exists~(IVT_{-i_1}^{p,k},IVT_{-i_2}^{p,k},...,IVT_{-i_v}^{p,k}) \in \mathscr{T}^{v}$ such that $\forall (n_1,n_2,...,n_v) \in \N_0^v$, $$\left((IVT_{i_1}^{p,k},IVT_{i_2}^{p,k},...,IVT_{i_v}^{p,k}) \oplus_p (IVT_{-i_1}^{p,k},IVT_{-i_2}^{p,k},...,IVT_{-i_v}^{p,k})\right)(n_1,n_2,...,n_v)$$ $$=(IVT_{0}^{p,k},IVT_{0}^{p,k},...,IVT_{0}^{p,k})(n_1,n_2,...,n_v)$$ $$=\left((IVT_{-i_1}^{p,k},IVT_{-i_2}^{p,k},...,IVT_{-i_v}^{p,k}) \oplus_p (IVT_{i_1}^{p,k},IVT_{i_2}^{p,k},...,IVT_{i_v}^{p,k})\right)(n_1,n_2,...,n_v)$$
	Thus for any element of $\mathscr{T}^{v}$  its additive inverse exists in $\mathscr{T}^{v}$.\\
	Commutativity follows from commutativity of $\oplus_p$ and $\otimes_p$.\\
	
	Now for any $IVT_{h}^{p,k}, IVT_{i}^{p,k}, IVT_{j}^{p,k} \in T^{p,k}$, $\forall \overline{n_s} \in \N_0^k$, we get, $$(IVT_{h}^{p,k} \otimes_p (IVT_{i}^{p,k} \oplus_p IVT_{j}^{p,k}))(\overline{n_s})=(IVT_{h}^{p,k} \otimes_p IVT_{i}^{p,k})(\overline{n_s}) \oplus_p (IVT_{h}^{p,k} \oplus_p IVT_{j}^{p,k})(\overline{n_s}),$$
	$$((IVT_{i}^{p,k} \oplus_p IVT_{j}^{p,k}) \otimes_p IVT_{h}^{p,k})(\overline{n_s})=(IVT_{i}^{p,k} \otimes_p IVT_{h}^{p,k})(\overline{n_s}) \oplus_p (IVT_{j}^{p,k} \oplus_p IVT_{h}^{p,k})(\overline{n_s})$$
	Thus for any $(IVT_{h_1}^{p,k},...,IVT_{h_v}^{p,k}), (IVT_{i_1}^{p,k},...,IVT_{i_v}^{p,k}),(IVT_{j_1}^{p,k},...,IVT_{j_v}^{p,k})\in \mathscr{T}^{v},~\forall (n_1,...,n_v) \in \N_0^v$, 
	
	$$\left(IVT_{h_1}^{p,k},...,IVT_{h_v}^{p,k})\otimes_p \left((IVT_{i_1}^{p,k},...,IVT_{i_v}^{p,k})\oplus_p (IVT_{j_1}^{p,k},...,IVT_{j_v}^{p,k})\right)\right)(n_1,...,n_v)$$
	$$=\left((IVT_{h_1}^{p,k},...,IVT_{h_v}^{p,k}) \otimes_p (IVT_{i_1}^{p,k},...,IVT_{i_v}^{p,k})\right)(n_1,...,n_v)$$ $$\oplus_p \left((IVT_{h_1}^{p,k},...,IVT_{h_v}^{p,k}) \otimes_p (IVT_{j_1}^{p,k},...,IVT_{j_v}^{p,k})\right)(n_1,...,n_v), and$$ $$\left(\left((IVT_{i_1}^{p,k},...,IVT_{i_v}^{p,k})\oplus_p (IVT_{j_1}^{p,k},...,IVT_{j_p}^{p,k})\right) \otimes_p (IVT_{h_1}^{p,k},...,IVT_{h_v}^{p,k})\right)(n_1,...,n_v)$$ $$=\left((IVT_{i_1}^{p,k},...,IVT_{i_v}^{p,k}) \otimes_p (IVT_{h_1}^{p,k},...,IVT_{h_v}^{p,k})\right)(n_1,...,n_v)$$ $$\oplus_p \left((IVT_{j_1}^{p,k},...,IVT_{j_v}^{p,k}) \otimes_p (IVT_{h_1}^{p,k},...,IVT_{h_v}^{p,k})\right)(n_1,...,n_v)$$ 
	Hence the theorem.
\end{proof}

\begin{thm}
	$(\mathscr{T}^{v}_{~~|Q}, \oplus_p, \otimes_p)$ forms a commutative ring with identity under the operations $\oplus_p$  and  $\otimes_p$.
\end{thm}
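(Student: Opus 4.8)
The plan is to avoid redoing the ring axioms from scratch and instead transport the structure along the isomorphism $\phi$ introduced in the Remark following the CA–IVT correspondence. The point that makes this work — and the reason the word ``with identity'' can now be added, unlike in the theorem for $(\mathscr{T}^{v},\oplus_p,\otimes_p)$ over $\N_0$ — is that restricting each $IVT_j^{p,k}$ to $Q^k=\{0,1,\dots,p-1\}^k$ forces every argument $\eta_s$ to be a single base-$p$ digit, i.e. the digit length is $l=1$. Consequently the digit-wise definitions of $\oplus_p$ and $\otimes_p$ on IVTs collapse: for $(\eta_1,\dots,\eta_k)\in Q^k$ one has $(\underline{IVT}_{j_1}^{p,k}\oplus_p\underline{IVT}_{j_2}^{p,k})(\eta_1,\dots,\eta_k)=f_{j_1}(\eta_1,\dots,\eta_k)\oplus_p f_{j_2}(\eta_1,\dots,\eta_k)\in Q$ and likewise for $\otimes_p$. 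Since $Q$ equipped with $\oplus_p,\otimes_p$ is (a copy of) the commutative ring $\Z/p\Z$, this already gives closure of $\mathscr{T}^{v}_{~~|Q}$ into itself, which is exactly the step that in the unrestricted case only landed in $\N_0$.

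First I would note that the proof that $(\mathscr{Q}^{\Z},\oplus_p,\otimes_p)$ is a commutative ring with identity goes through verbatim for a finite $v$-celled CA, so $(\mathscr{Q}^{v},\oplus_p,\otimes_p)$ is a commutative ring with identity (additive identity the all-$\mu_0$ tuple, multiplicative identity the all-$\mu_{id}$ tuple, additive inverses taken cell-by-cell using that each $b\in Q$ has a $\oplus_p$-inverse in $Q$). Next I would check that the bijection $\phi:\mathscr{Q}^{v}\to\mathscr{T}^{v}_{~~|Q}$ of the Remark is compatible with both operations, not merely a bijection of function sets: because $\phi$ is applied cell-by-cell via $\mu_j\mapsto\underline{IVT}^{p,k}_j$, and because both $\oplus_p$ and $\otimes_p$ act componentwise in the $v$ coordinates and then pointwise on local/IVT values, one gets $\phi(\tau_1\oplus_p\tau_2)=\phi(\tau_1)\oplus_p\phi(\tau_2)$ and $\phi(\tau_1\otimes_p\tau_2)=\phi(\tau_1)\otimes_p\phi(\tau_2)$ directly from the definitions. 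Transport of structure along this ring isomorphism then yields the claim; alternatively one may mirror the proof of the theorem for $(\mathscr{T}^{v},\oplus_p,\otimes_p)$ line by line, the only genuinely new ingredient being the multiplicative identity.

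That multiplicative identity I would exhibit explicitly on the IVT side: let $f_{id}:Q^k\to Q$ be the constant function $f_{id}(\eta_1,\dots,\eta_k)=1$. This is one of the $p^{p^k}$ admissible local functions, hence corresponds to a legitimate Wolfram code $id\in\{0,1,\dots,p^{p^k}-1\}$ and $\underline{IVT}^{p,k}_{id}\in T^{p,k}_{~|Q}$. For every $(\eta_1,\dots,\eta_k)\in Q^k$, $(\underline{IVT}^{p,k}_{j}\otimes_p\underline{IVT}^{p,k}_{id})(\eta_1,\dots,\eta_k)=f_j(\eta_1,\dots,\eta_k)\otimes_p 1=f_j(\eta_1,\dots,\eta_k)$, so the tuple $(\underline{IVT}^{p,k}_{id},\dots,\underline{IVT}^{p,k}_{id})$ is a two-sided $\otimes_p$-identity for $\mathscr{T}^{v}_{~~|Q}$; the additive identity is the analogous all-$\underline{IVT}^{p,k}_{0}$ tuple with $f_0\equiv 0$. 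The ``main obstacle'' here is purely bookkeeping rather than anything substantive: one must make explicit that restriction to $Q^k$ makes $l=1$ so the digit-wise operations really do reduce to the modular operations on a single digit, and one must confirm that the constant-$1$ map is actually realized inside $T^{p,k}_{~|Q}$ so that ``with identity'' is not spoiled by the identity lying outside the set. No analytic or combinatorial difficulty beyond that arises, and $p$ need not be prime.
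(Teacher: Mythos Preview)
Your argument is correct, and in fact more carefully argued than the paper's own proof, but the primary route you take is genuinely different. The paper does not invoke the isomorphism $\phi$ with $\mathscr{Q}^{v}$ at all: it simply asserts that $(\mathscr{T}^{v}_{~~|Q},\oplus_p,\otimes_p)$ is a commutative ring because $\mathscr{T}^{v}_{~~|Q}\subseteq\mathscr{T}^{v}$ under the restriction of domains, thereby inheriting the ring axioms already verified for $(\mathscr{T}^{v},\oplus_p,\otimes_p)$, and then exhibits the multiplicative identity exactly as you do, via the constant-$1$ function $f_{id}$ and the tuple $(\underline{IVT}_{id}^{p,k},\dots,\underline{IVT}_{id}^{p,k})$. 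Your alternative suggestion of ``mirroring the proof line by line'' is essentially what the paper relies on in compressed form.

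What your approach buys is rigor and explanation: the paper's ``subset hence ring'' step is informal (the restricted functions have a different domain, so one is not literally looking at a subring), whereas your transport-of-structure via $\phi$ is a clean justification. More importantly, your observation that restriction to $Q^k$ forces digit length $l=1$, so that the digitwise $\otimes_p$ collapses to a single modular product, is precisely the reason the constant-$1$ map now acts as a multiplicative identity when it could not over $\N_0$; the paper uses this fact implicitly but never isolates it. Conversely, the paper's route is shorter and avoids having to first argue that the CA ring result holds for finite $v$ and that $\phi$ respects both operations.
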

\begin{proof}
	Clearly $(\mathscr{T}^{v}_{~~|Q}, \oplus_p, \otimes_p)$ for $p,k,v \in \N$ is a commutative ring since $\mathscr{T}^{v}_{~~|Q} \subseteq \mathscr{T}^{v}$ when $\N_0^v$ is restricted to $Q^v$.\\
	Now, for any $(\eta_1,\eta_2,...,\eta_k) \in Q^k$, $\exists$ multiplicative identity $\underline{IVT}_{id}^{p,k} \in T^{p,k}_{~|Q}$ such that $$\underline{IVT}_{id}^{p,k}(\eta_1,\eta_2,...,\eta_k)=\left(f_{id}(\eta_1,\eta_2,...,\eta_k)\right)_p= 1$$ $$since~(\underline{IVT}_j^{p,k}\otimes_p \underline{IVT}_{id}^{p,k})(\eta_1,\eta_2,...,\eta_k) = \underline{IVT}_j^{p,k}(\eta_1,\eta_2,...,\eta_k)= \left(f_{j}(\eta_1,\eta_2,...,\eta_k)\right)_p,$$ $$and~(\underline{IVT}_{id}^{p,k}\otimes_p \underline{IVT}_{j}^{p,k})(\eta_1,\eta_2,...,\eta_k)=\underline{IVT}_j^{p,k}(\eta_1,\eta_2,...,\eta_k)= \left(f_{j}(\eta_1,\eta_2,...,\eta_k)\right)_p $$
	It follows that $(\underline{IVT}_{id}^{p,k},\underline{IVT}_{id}^{p,k},...,\underline{IVT}_{id}^{p,k}) \in \mathscr{T}^{v}_{~~|Q}$ is the multiplicative identity of $\mathscr{T}^{v}_{~~|Q}$ such that $\forall (n_1,n_2,...,n_v) \in Q^v$, $$(\underline{IVT}_{id}^{p,k},\underline{IVT}_{id}^{p,k},...,\underline{IVT}_{id}^{p,k})(n_1,n_2,...,n_v)=\underline{IVT}_{id}^{p,k}(\overline{n_1})\underline{IVT}_{id}^{p,k}(\overline{n_2})...\underline{IVT}_{id}^{p,k}(\overline{n_v})$$ where $\underline{IVT}_{id}^{p,k}(\overline{n_s})=\underline{IVT}_{id}^{p,k}(\eta_1^s,\eta_2^s,...,\eta_k^s)=1$ for each $s=1,2,...,v$.\\
	Hence the theorem.
\end{proof}

\section{Modelling of DNA Sequence Evolution}
DNA can be modelled as a one-dimensional CA using rule matrix multiplication for linear CA(reported in \cite{mizas2008reconstruction}). The DNA sequence corresponds to the CA lattice and the deoxyribose sugars to the CA cells. The $4$ sugar bases A, C, T and G correspond to $4$ possible states of the CA cell.\\
Now, $1$-neighbourhood CA transition rule is practically absurd since the transition of cell states of a CA is dependent on neighbouring cells. Consequently, computation using $4$-state, $k(\in \N \geq 2)$-neighbourhood CA maybe complex. However, in the domain of IVTs, computations using one-dimensional IVTs hold good. Moreover,  for non-linear transition rules of a CA, the corresponding rule matrix maynot be obtained in general.\\ Hence, modelling DNA sequence evolution in terms of one-dimensional IVTs of $4$-base is a suitable alternative(see \cite{hassan2012dna}).\\  For any one-variable system having base $4$, there can be $4^{4^1}=256$ different functions of which $4^1=4$ are linear functions and the others non-linear. 
The following example depicts DNA sequence evolution in terms of IVTs where the sugar bases have been represented with numbers as follows: $$A\rightarrow 0, C\rightarrow 1, T\rightarrow 2, G\rightarrow 3$$
\begin{exm}
	Let CTCTAGAGGGAA be a particular DNA strand of length $12$ at some time. This strand corresponds to the number sequence $121203033300$.\\ Evolution of this strand at the next time step can be represented by any $(IVT_{j_1}^{4,1},...,IVT_{j_{12}}^{4,1}) \in \tau^{12}$ where $j_1,...j_{12} \in \{0,1,...,4^{4^1}-1\}$.\\Let the DNA strand be broken into $2$ blocks of $6$ sugar bases each.\\ If the first block evolves according to non-linear function $f_{78}(0\rightarrow 2,1\rightarrow 3,2\rightarrow 0,3\rightarrow 1)$ and the second block evolves according to linear function $f_{108}(0\rightarrow 0,1\rightarrow 3,2\rightarrow 2,3\rightarrow 1)$ then, $$(IVT_{j_1}^{4,1},...,IVT_{j_{12}}^{4,1})(\underbrace{121203}_{block 1}\underbrace{033300}_{block 2})_4$$ $$=\left(f_{78}(1)f_{78}(2)f_{78}(1)f_{78}(2)f_{78}(0)f_{78}(3)f_{108}(0)f_{108}(3)f_{108}(3)f_{108}(3)f_{108}(0)f_{108}(0)\right)_4$$ $$=(303021011100)_4$$ Thus the DNA strand evolved at the next step would be GAGATCACCCAA. Hence the evolution pattern at any later time step can be obtained recursively.   
\end{exm}

\section{Conclusion}
In this paper Wolfram code and transition function of a one-dimensional CA has been represented in terms of an IVT. Moreover, if the domain of IVTs is restricted from positive integers to $p(\in \N)$-base numbers, then an odd dimensional IVT is equivalent to a local transition function of a CA. It could be proved that the set of all transition functions of a $p(\in \N)$-state CA, forms a commutative ring with identity, whereas the set of $p$-base IVTs forms only a commutative ring under modular addition and multiplication. However, for the set of $p$-base IVTs multiplicative identity under modular addition and multiplication exists, when the domain of IVTs is restricted from positive integers to $p(\in \N)$-base numbers. The novelty of using IVTs instead of one-dimensional CA, for modelling DNA sequence evolution has been discussed and depicted in the last section. 

\bibliography{mybibfile}
\end{document}